\let\proof\relax   
\newtheorem{lemma}{Lemma}
\newtheorem{theorem}{Theorem}
\newtheorem{result}{Result}
\begin{document}

\newcommand{\SB}[3]{
\sum_{#2 \in #1}\biggl|\overline{X}_{#2}\biggr| #3
\biggl|\bigcap_{#2 \notin #1}\overline{X}_{#2}\biggr|
}

\newcommand{\Mod}[1]{\ (\textup{mod}\ #1)}

\newcommand{\overbar}[1]{\mkern 0mu\overline{\mkern-0mu#1\mkern-8.5mu}\mkern 6mu}

\makeatletter
\newcommand*\nss[3]{%
  \begingroup
  \setbox0\hbox{$\m@th\scriptstyle\cramped{#2}$}%
  \setbox2\hbox{$\m@th\scriptstyle#3$}%
  \dimen@=\fontdimen8\textfont3
  \multiply\dimen@ by 4             
  \advance \dimen@ by \ht0
  \advance \dimen@ by -\fontdimen17\textfont2
  \@tempdima=\fontdimen5\textfont2  
  \multiply\@tempdima by 4
  \divide  \@tempdima by 5          
  \ifdim\dimen@<\@tempdima
    \ht0=0pt                        
    \@tempdima=\fontdimen5\textfont2
    \divide\@tempdima by 4          
    \advance \dimen@ by -\@tempdima 
    \ifdim\dimen@>0pt
      \@tempdima=\dp2
      \advance\@tempdima by \dimen@
      \dp2=\@tempdima
    \fi
  \fi
  #1_{\box0}^{\box2}%
  \endgroup
  }
\makeatother

\makeatletter
\renewenvironment{proof}[1][\proofname]{\par
  \pushQED{\qed}%
  \normalfont \topsep6\p@\@plus6\p@\relax
  \trivlist
  \item[\hskip\labelsep
        \itshape
    #1\@addpunct{:}]\ignorespaces
}{%
  \popQED\endtrivlist\@endpefalse
}
\makeatother

\makeatletter
\newsavebox\myboxA
\newsavebox\myboxB
\newlength\mylenA

\newcommand*\xoverline[2][0.75]{%
    \sbox{\myboxA}{$\m@th#2$}%
    \setbox\myboxB\null
    \ht\myboxB=\ht\myboxA%
    \dp\myboxB=\dp\myboxA%
    \wd\myboxB=#1\wd\myboxA
    \sbox\myboxB{$\m@th\overline{\copy\myboxB}$}
    \setlength\mylenA{\the\wd\myboxA}
    \addtolength\mylenA{-\the\wd\myboxB}%
    \ifdim\wd\myboxB<\wd\myboxA%
       \rlap{\hskip 0.5\mylenA\usebox\myboxB}{\usebox\myboxA}%
    \else
        \hskip -0.5\mylenA\rlap{\usebox\myboxA}{\hskip 0.5\mylenA\usebox\myboxB}%
    \fi}
\makeatother

\xpatchcmd{\proof}{\hskip\labelsep}{\hskip3.75\labelsep}{}{}

\pagestyle{plain}

\title{\fontsize{22.59}{28}\selectfont A Systematic Approach to Incremental\\Redundancy over Erasure Channels}

\author{Anoosheh Heidarzadeh, Jean-Francois Chamberland,\thanks{A. Heidarzadeh and J.-F. Chamberland are with the Department of Electrical and Computer Engineering, Texas A\&M University, College Station, TX 77843 USA (E-mail: \{anoosheh, chmbrlnd\}@tamu.edu).} Parimal Parag,\thanks{P. Parag is with the Department of Electrical Communication Engineering, Indian Institute of Science, Bengaluru, India (E-mail: parimal@iisc.ac.in).} and Richard D. Wesel\thanks{R.D. Wesel is with the Department of Electrical Engineering, University of California, Los Angeles, CA 90095 USA (E-mail: wesel@ucla.edu).}\thanks{This material is based upon work supported by the National Science Foundation under Grants No.~CCF-1619085 and~CCF-1618272.}}


\maketitle

\thispagestyle{plain}

\begin{abstract}
As sensing and instrumentation play an increasingly important role in systems controlled over wired and wireless networks, the need to better understand delay-sensitive communication becomes a prime issue.
Along these lines, this article studies the operation of data links that employ incremental redundancy as a practical means to protect information from the effects of unreliable channels.
Specifically, this work extends a powerful methodology termed sequential differential optimization to choose near-optimal block sizes for hybrid ARQ over erasure channels.
In doing so, an interesting connection between random coding and well-known constants in number theory is established.
Furthermore, results show that the impact of the coding strategy adopted and the propensity of the channel to erase symbols naturally decouple when analyzing throughput.
Overall, block size selection is motivated by normal approximations on the probability of decoding success at every stage of the incremental transmission process.
This novel perspective, which rigorously bridges hybrid ARQ and coding, offers a pragmatic means to select code rates and blocklengths for incremental redundancy.
\end{abstract}

\section{Introduction}

As the reach of the Internet keeps expanding beyond its traditional applications and incorporates more sensing, actuation, and cyber-physical systems, there is a pressing need to better understand delay-sensitive communication over unreliable channels.
The increasing popularity of interactive communications, live gaming over mobile devices, and augmented reality also contribute to a growing interest in low-latency connections.
These circumstances have been an important motivating factor underlying several recent inquiries pertaining to information transfers under stringent delay constraints.
Such contributions include the divergence framework for short blocklengths~\cite{polyanskiy2010dispersion,polyanskiy2011dispersion}, the interplay between coding and queueing~\cite{ieee-tit-2013-kcp}, and ongoing work on the age of information~\cite{ephremides2016age,yates2017age}.

Hybrid automatic repeat request (ARQ) has been identified as a key approach to deliver information in a timely manner over unreliable channels.
It can be designed to adapt gracefully to channel degradations associated with fading and interference,
and it has found wide application in theory and practice.
In some sense, hybrid ARQ is a means to leverage limited feedback between a source and its destination to ensure the timely delivery of information, especially in short blocklength regimes.
Researchers have developed techniques to analyze the benefits of communication systems with hybrid ARQ \cite{le2007queueing,ieee-tcomm-2015-hcp}. Yet, until recently, brute force searches, simulation studies, and ad-hoc schemes remained the primary means of parameter selection in terms of blocklengths and code rate for such systems, see, e.g.,~\cite{wesel2015harq}. This situation changed when Vakilinia et al.\ introduced a novel approach for parameter selection~\cite{VWRDW:2014,VRDW:2016}.
Their proposed methodology captures the effects of the physical channel on code performance by defining an approximate empirical distribution on the probability that a rate compatible code decodes successfully at each of its available rates.
Based on the ensuing distribution, the authors then put forth a numerically efficient, sequential differential optimization (SDO) algorithm that yields best operational parameters for hybrid ARQ.

In this work, we extend the algorithm of~\cite{VRDW:2016} to erasure channels, and we characterize performance for a class of random codes. Our results provide an algorithmic blueprint for parameter selection applied to hybrid ARQ and random codes, a popular combination in the literature. The analysis reveals a clear separation between the effects of the erasure channel and the attributes of the underlying code in selecting block sizes. For the adopted coding scheme, a systematic approach that links decoding success to the number of observed symbols is derived based on moment matching. Altogether, the performance of a system with incremental redundancy hinges on three main components: the coding scheme employed, the behavior of the erasure channel, and the quantization effects associated with hybrid ARQ blocks.
These components, along with design decisions, are discussed below.


\section{Coding Scheme and Incremental Redundancy}

We assume that the communication system employs forward error correction to protect information bits from potential channel erasures, as in \cite{ieee-tit-2013-kcp}.
We denote the size of the original message by $k$.
Redundancy is added using a random coding scheme, which serves as an analytically tractable proxy for more practical codes \cite{Gallager0471290483,Richardson0521852293}.
Specifically, the encoding of a message proceeds as follows.
First, a random binary parity-check matrix of size $(n-k) \times n$ is generated, with every entry selected uniformly over the binary alphabet, independently from other elements.
The nullspace of this matrix yields a codebook for the transmission.
The mapping of a message to a codeword is then performed using an arbitrary choice function known to both the source and the destination. Maximum-likelihood decoding is performed at the destination to recover the original message. This coding scheme is known to perform well for large $n$, and it enables the fine selection of a code rate, as any rate of the form $k/n$, where $k \leq n$, is admissible.

Under hybrid ARQ, encoded symbols are transmitted in distinct sub-blocks, rather than all at once. Untransmitted bits can simply be labeled as erasures during early decoding attempts.
Furthermore, the statistical symmetry in the random code structure ensures that the probability of decoding success only depends on the number of erased symbols, rather than their specific locations.
In this context, one can examine the effective blocklength after the transmission of $j$ sub-blocks or, equivalently, one can focus on the size of sub-block $j$. Throughout, we represent the effective blocklength upon transmission of $j$ sub-blocks by $n_j$, and we write $l_j$ to denote the number of symbols transmitted as part of step~$j$.
These quantities are related via the equations $n_j = \sum_{i=1}^j l_i$ and $l_j = n_{j} - n_{j-1}$ for all $j\in [m]$, where, for convenience, we adopt the convention $n_0 = 0$.
Consider the use of such a random code paired to a memoryless binary erasure channel with erasure probability $0 \leq \epsilon < 1$.
Let $\mathbf{c} = (c_1, \dots, c_n)$ denote the codeword corresponding to a given message $\mathbf{x} = (x_1, \dots, x_k)$.
Also, let $\mathbf{c}_i \triangleq \{c_j: n_{i-1} < j \leq n_{i}\}$ for $i \in [m]$ be the $i$th sub-block of this codeword; consequently, the length of sub-block~$i$ is $|\mathbf{c}_i| = l_i$. 

Each transmission round proceeds as follows.
First, the source sends sub-block $\mathbf{c}_1$.
The destination then receives sub-block $\mathbf{c}_1$, or a proper subset thereof, depending on the realized erasure pattern.
The destination seeks to recover the message $\mathbf{x}$, and responds by sending an ACK or NACK to the source (over an erasure/error-free feedback channel) based on the outcome the decoding attempt.
If the source receives a NACK, it continues with the transmission of sub-block $\mathbf{c}_2$, and waits for an ACK or NACK.
This action repeats until (i) the source receives an ACK and forgoes the transmission of the remaining sub-blocks, if any; or (ii) the source exhausts all the available sub-blocks for this codeword.
In case (i), the transmission round is deemed successful, and the source proceeds to the next message; whereas, in case (ii), the transmission round fails.
In the event of a failure, the destination discards the information aggregated since the beginning of the round, and the source starts a new transmission round for the message $\mathbf{x}$.
Within the framework where a maximum of $m$ sub-blocks are available for retransmission, our goal is to identify sub-block sizes $\{l_i\}_{i=1}^{m-1}$ such that the expected number of transmitted symbols until the message becomes decodable is minimized. 


\section{Asymptotic Analysis of Random Codes}

For the random coding scheme at hand, we use $P_{\mathrm{s}}(k,n,r)$ to represent the probability of decoding success as a function of the number of symbols available at the receiver, $r$.
We note that the probability of decoding failure as a function of $r$ necessarily becomes $P_{\mathrm{f}}(k,n,r) \triangleq 1-P_{\mathrm{s}}(k,n,r)$.
The following result holds. 
(The proofs of all lemmas in this section can be found in the appendix.)

\begin{lemma}\label{lem:DSP}
The probability of decoding success for the aforementioned random code is
\begin{equation}
\label{equation:DecodingSuccessExtended}
P_{\mathrm{s}} (k, n, r)
\triangleq \begin{cases}
0, & r < k \\
\prod_{l=0}^{n - r - 1} \left( 1 - 2^{l - (n-k)} \right), & k \leq r \leq n \\
1, & r > n .
\end{cases}
\end{equation} 
\end{lemma}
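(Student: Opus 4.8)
The plan is to reduce the event of decoding success to a purely linear-algebraic condition on the random parity-check matrix $H$, and then to evaluate the probability of that condition by a standard counting argument over $\mathbb{F}_2$. After $r$ of the $n$ symbols have arrived, the destination holds the transmitted codeword $\mathbf{c}$ restricted to a set $R$ of received coordinates, while the complementary set $E$, of size $e = n - r$, has been erased. Since the codebook is the nullspace of $H$ and decoding is maximum-likelihood, the message is recovered exactly when $\mathbf{c}$ is the unique codeword consistent with the observation on $R$. First I would observe that two codewords agree on $R$ precisely when their difference is a codeword supported on $E$; writing $H_E$ for the submatrix of $H$ consisting of the columns indexed by $E$, any such difference $\mathbf{v}$ must satisfy $H_E \mathbf{v}_E = \mathbf{0}$. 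Consequently, the transmitted codeword is uniquely recoverable if and only if $H_E$ has full column rank $e$. This characterization does not depend on which codeword was sent, which confirms the symmetry asserted earlier: $P_{\mathrm{s}}$ depends on the erasure pattern only through $H_E$.

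Next I would exploit the distribution of $H$. Because its entries are drawn independently and uniformly over $\{0,1\}$, the submatrix $H_E$ is an $(n-k) \times e$ uniformly random binary matrix whose law depends only on $e = n - r$, and not on the identity of the erased coordinates. Hence $P_{\mathrm{s}}(k,n,r)$ equals the probability that a uniformly random $(n-k) \times (n-r)$ binary matrix has full column rank. I would compute this by revealing the columns one at a time: given that the first $i$ columns are linearly independent, they span a subspace of $\mathbb{F}_2^{n-k}$ of cardinality $2^i$, so the $(i+1)$-st column preserves linear independence with probability $1 - 2^{i-(n-k)}$. Multiplying these conditional probabilities for $i = 0, \dots, e-1$ yields $\prod_{l=0}^{n-r-1}\left(1 - 2^{l-(n-k)}\right)$, which is the middle case of the claimed expression.

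It then remains to reconcile the boundary cases with this product. When $r = n$ there are no erasures, the product is empty, and it equals $1$. When $r < k$ we have $e = n - r > n - k$, so more than $n-k$ linearly independent vectors would be required inside the $(n-k)$-dimensional space $\mathbb{F}_2^{n-k}$, which is impossible; consistently, the product then contains the factor indexed by $l = n-k$, namely $1 - 2^{0} = 0$, and so it vanishes, matching the stated value of $0$. The case $r > n$ is immediate, since every symbol is then available. The step I expect to be the main obstacle is the first one: arguing rigorously that maximum-likelihood decoding over the erasure channel succeeds exactly when $H_E$ has full column rank. In particular, I would need to verify that, under the injective message-to-codeword assignment, unique recovery of the \emph{codeword} is the operative notion of success and is unaffected by the particular choice function, so that the clean product formula is not contaminated by the specific labeling of codewords even when $H$ itself fails to have full rank.
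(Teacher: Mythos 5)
Your proposal follows essentially the same route as the paper's proof: decoding success is reduced to the event that the $n-r$ columns of $H$ indexed by the erased positions are linearly independent over $\mathbb{F}_2$, and that probability is computed by revealing the columns sequentially, giving $\prod_{l=0}^{n-r-1}\left(1-2^{l-(n-k)}\right)$. You in fact supply more detail than the paper does---the difference-of-codewords justification of the full-column-rank characterization and the boundary-case checks---and the ``obstacle'' you flag concerning the choice function is glossed over by the paper as well, which simply asserts the linear-independence condition as equivalent to success.
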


Although the number of sent symbols and, consequently, the number of symbols available at the receiver cannot exceed the blocklength in practice, we extend function $P_{\mathrm{s}}(k,n,r)$ in \eqref{equation:DecodingSuccessExtended} to cases where $r > n$.
The utility of this definition will become manifest shortly, when comparing coding schemes. 

\begin{lemma} \label{lemma:StochaticDominance}
For any $k$ and $r$, the function $P_{\mathrm{s}} (k, n, r)$ is monotone decreasing in $n$.
\end{lemma}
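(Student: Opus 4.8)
The plan is to work directly from the closed form in Lemma~\ref{lem:DSP} and treat the three regimes in its definition separately, keeping $k$ and $r$ fixed while letting the blocklength $n$ vary. When $r < k$ the value $P_{\mathrm{s}}(k,n,r)=0$ is independent of $n$, and when $n < r$ the value is the constant $1$; both cases are trivially consistent with a (weakly) decreasing function. All of the content therefore lies in the middle regime $k \le r \le n$, where the product expression applies, together with the requirement that the pieces join monotonically at the transition $n=r$.

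The key step I would carry out first is a change of the product index that decouples the dependence on $n$. Setting $j = (n-k) - l$ turns the factor $1 - 2^{l-(n-k)}$ into $1 - 2^{-j}$, and as $l$ runs from $0$ to $n-r-1$ the index $j$ runs from $n-k$ down to $r-k+1$. Hence, for $k \le r \le n$,
\begin{equation}
P_{\mathrm{s}}(k,n,r) = \prod_{j=r-k+1}^{n-k}\left(1 - 2^{-j}\right).
\end{equation}
The advantage of this form is that every factor now depends only on $j$ and not on $n$: the lower limit $r-k+1$ is fixed once $k$ and $r$ are fixed, and only the upper limit $n-k$ grows with $n$.

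With this representation the monotonicity becomes essentially immediate. Since $r \ge k$ in this regime, the index satisfies $j \ge r-k+1 \ge 1$, so each factor obeys $0 < 1 - 2^{-j} < 1$; in particular the product is strictly positive. Incrementing $n$ to $n+1$ appends exactly one new factor, giving $P_{\mathrm{s}}(k,n+1,r) = P_{\mathrm{s}}(k,n,r)\,(1 - 2^{-(n+1-k)})$, and because the appended factor lies in $(0,1)$ the value strictly decreases. It remains only to check that the product piece joins the constant piece correctly: at $n=r$ the product is empty and equals $1$, matching the value $1$ attained for all $n<r$, after which the strict decrease takes over at $n=r+1$. Combining the three regimes then yields that $P_{\mathrm{s}}(k,n,r)$ is monotone decreasing in $n$.

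I do not anticipate a genuine obstacle here; the only place demanding care is the bookkeeping at the regime boundaries, namely verifying that the empty product at $n=r$ returns $1$ and dovetails with the constant stretch $n<r$, so that the piecewise definition is monotone as a whole rather than merely on each piece.
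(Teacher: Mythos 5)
Your proof is correct and takes essentially the same route as the paper's: both re-index the product as $\prod_{j=r-k+1}^{n-k}\left(1-2^{-j}\right)$ so that $n$ enters only through the upper limit, observe that increasing $n$ merely appends factors lying in $(0,1)$, and dispose of the regimes $r<k$ and $r>n$ trivially. The only differences are immaterial: you compare $n$ with $n+1$ while the paper compares arbitrary $n_1<n_2$, and your upper limit $n-k$ is in fact the correct one (the paper's displayed limits $n_i-r-1$ contain a small indexing typo that does not affect its argument).
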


Let $R_t$ be the number of observed (non-erased) symbols available to the receiver at time~$t$, and let $P_{R_t}$ be the discrete probability measure associated with $R_t$.
Then, $R_t$ has a binomial distribution with parameters $t$ and $1-\epsilon$, i.e., $P_{R_t}(r) = \binom{t}{r} \epsilon^{t-r} (1-\epsilon)^{r}$.
Moreover, the probability that the destination sends an ACK at time $t$ or earlier becomes
\begin{equation*}\label{eq:Packni}
P_{\mathrm{ack}}(t) \triangleq 1 - \textstyle \sum_{e=0}^{t} P_{\mathrm{f}}(k,n,t-e) P_{R_t}(t-e)
\end{equation*}
for $k\leq t\leq n$; $P_{\mathrm{ack}}(t) = 0$ for $t<k$. 
Consequently, we get $P_{\mathrm{nack}}(t)\triangleq 1-P_{\mathrm{ack}}(t)$.
In words, $P_{\mathrm{ack}}(n_i)$ and $P_{\mathrm{nack}}(n_i)$ designate the probabilities that, after the transmission of the $i$th block, the source receives an ACK and a NACK, respectively. 
 
Let $S\in [m]$ be the index of the last sub-block that the source sends within a transmission round, and let $n_S$ be the corresponding number of symbols.
Note that $S$ and $n_S$ are random variables; the expected number of symbols, $\mathbb{E}[n_S]$, is given by 
\begin{equation} \label{eq:Ens}
\begin{split}
\mathbb{E}[n_S]
&= \textstyle \sum_{i=2}^{m} n_i\left(P_{\mathrm{ack}}(n_i)-P_{\mathrm{ack}}(n_{i-1})\right) \\
&\quad +n_1P_{\mathrm{ack}}(n_1) + n_m P_{\mathrm{nack}}(n_m) \\
&= \textstyle \sum_{i=1}^{m-1} (n_i-n_{i+1})P_{\mathrm{ack}}(n_i)+n_m .
\end{split}
\end{equation}
Expectation $\mathbb{E}[n_S]$ is a multivariate function of $\{n_i\}_{i\in [m-1]}$. For any given $k,n,m$, and $\epsilon$, the problem is to find $\{n_i\}_{i\in [m-1]}$ such that $\mathbb{E}[n_S]$ is minimum.

\subsection{Asymptotic Behavior over Reliable Channels}

We initiate our analysis by focusing on the special case of a lossless channel, i.e., $\epsilon=0$.
We consider an elementary version of the problem where the transmitted symbols are received in sequence (not in blocks), and a decoding attempt takes place every time a new symbol arrives (not only after the reception of a new sub-block). 
For  $k$ and $n$ fixed, let $M_n$ be a random variable that denotes the number of symbols needed for the message to become decodable, following the prescribed order.
Under this definition, we get $k\leq M_n\leq n$ and $\Pr(M_n\leq r) = P_{\mathrm{s}}(k,n,r)$.
We wish to study the asymptotic behavior of the mean and variance of $M_n$ as $n$ grows unbounded.

Denote the Erd\"{o}s-Borwein constant by $c_0 = \sum_{i=1}^{\infty} \frac{1}{2^{i}-1}=1.6066951524...$, and the digital search tree constant by $c_1 = \sum_{i=1}^{\infty} \frac{1}{(2^{i}-1)^2}=1.1373387363...$. The following infinite sums of products are key components in our analysis.   

\begin{lemma}\label{lem:infsums}
For $a_i \triangleq 2^{-i}\prod_{j=i+1}^{\infty} \left( 1-2^{-j} \right)$, it holds that
\begin{align*}
&\textstyle \sum_{i=0}^{\infty} a_i = 1 \\	
&\textstyle \sum_{i=0}^{\infty} i a_i = c_0 =1.6066951524... \\
&\textstyle \sum_{i=0}^{\infty} i^2 a_i = c^2_0+c_0+c_1 =5.3255032015...
\end{align*}
\end{lemma}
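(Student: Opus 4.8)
The plan is to fold all three identities into a single probability generating function and then read off the moments by logarithmic differentiation, so that normalization and both moment computations proceed uniformly. First I would put the summand in a form that exposes a classical $q$-series. Setting $Q_i\triangleq\prod_{j=1}^{i}(1-2^{-j})$ (with $Q_0=1$) and $Q_\infty\triangleq\prod_{j=1}^{\infty}(1-2^{-j})$, the telescoping $\prod_{j=i+1}^{\infty}(1-2^{-j})=Q_\infty/Q_i$ gives $a_i=2^{-i}Q_\infty/Q_i$. Defining $A(z)\triangleq\sum_{i=0}^{\infty}a_i z^i$, this yields $A(z)=Q_\infty\sum_{i=0}^{\infty}(z/2)^i/Q_i$, so the three claims become $A(1)=1$, $A'(1)=c_0$, and $A''(1)+A'(1)=c_0^2+c_0+c_1$.

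The key step is to recognize the remaining sum as Euler's $q$-exponential at $q=1/2$. Invoking Euler's identity $\sum_{i=0}^{\infty}x^i/\prod_{j=1}^{i}(1-q^j)=\prod_{m=0}^{\infty}(1-xq^m)^{-1}$ with $q=1/2$ and $x=z/2$ collapses the series and produces the closed product form
\begin{equation*}
A(z)=\prod_{m=1}^{\infty}\frac{1-2^{-m}}{1-z2^{-m}},
\end{equation*}
whose nearest singularity is the pole at $z=2$, so the product converges locally uniformly and is analytic on $|z|<2$. Evaluating at $z=1$ gives $A(1)=1$ immediately, which is the first identity.

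With the product in hand I would obtain the two moments by differentiating $\log A$, which turns the product into a sum. From $\log A(z)=\sum_{m=1}^{\infty}\bigl[\log(1-2^{-m})-\log(1-z2^{-m})\bigr]$ I get $A'(z)/A(z)=\sum_{m=1}^{\infty}2^{-m}/(1-z2^{-m})$ and, differentiating once more, $(\log A)''(z)=\sum_{m=1}^{\infty}2^{-2m}/(1-z2^{-m})^2$. Since $A(1)=1$, setting $z=1$ gives $A'(1)=\sum_{m=1}^{\infty}2^{-m}/(1-2^{-m})=\sum_{m=1}^{\infty}1/(2^m-1)=c_0$, the second identity. For the third, the standard relation $A''(1)=(A'(1))^2+(\log A)''(1)$ together with $(\log A)''(1)=\sum_{m=1}^{\infty}1/(2^m-1)^2=c_1$ gives $A''(1)=c_0^2+c_1$, hence $\sum_{i=0}^{\infty}i^2 a_i=A''(1)+A'(1)=c_0^2+c_0+c_1$, matching the stated value $5.3255\ldots$.

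The hard part is the middle step: spotting that this Lambert-type series is exactly a $q$-exponential and applying Euler's identity to obtain the product form. Once that is secured, everything downstream reduces to routine logarithmic differentiation; the only analytic bookkeeping is to justify term-by-term differentiation and evaluation at $z=1$, which follows from the local uniform convergence of the product and its logarithmic derivatives on the pole-free disc $|z|<2$.
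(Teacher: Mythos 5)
Your proof is correct, and it shares its engine with the paper's: the identity you invoke, $\sum_{i\ge 0}x^i/\prod_{j=1}^i(1-q^j)=\prod_{m\ge 0}(1-xq^m)^{-1}$ at $q=1/2$, is exactly the paper's equation~\eqref{eq:EulerPent} (which the paper attributes, somewhat loosely, to Euler's pentagonal number theorem; your identification of it as Euler's $q$-exponential identity is the accurate one), and in both cases the moments come from differentiating it. The genuine differences are organizational but real. First, the paper proves $\sum_i a_i = 1$ by a separate, elementary telescoping argument on the partial products $S_l = \sum_i 2^{-i}\prod_{j=i+1}^{i+l}(1-2^{-j})$, with a recursion $S_l = S_{l-1} - 2^l/((2^l-1)(2^{l+1}-1))$ and partial fractions; your product form $A(z)=\prod_{m\ge 1}(1-2^{-m})/(1-z2^{-m})$ makes this identity fall out instantly as $A(1)=1$, so your treatment is more unified. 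Second, the paper differentiates the two-sided identity directly in $x$ (product rule on the infinite product, then setting $x=1/2$ and re-indexing), whereas you differentiate $\log A$ and use $A''(1)=(A'(1))^2+(\log A)''(1)$; this makes the appearance of $c_0^2$ and $c_1$ structurally transparent rather than emerging from algebraic rearrangement. What the paper's route buys is that its first identity needs nothing beyond elementary series manipulation; what yours buys is a single analytic object with a clean justification (analyticity on $|z|<2$, evaluation at the interior point $z=1$) covering all three sums at once.
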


Let $P_{M_n}$ be the discrete probability measure for $M_n$, i.e., $P_{M_n}(r)=P_{\mathrm{s}}(k,n,r)-P_{\mathrm{s}}(k,n,r-1)$.
It is easy to verify that $P_{M_n}(r)=2^{k-r}P_{\mathrm{s}}(k,n,r)=2^{k-r} \prod_{l=0}^{n-r-1} \left( 1-2^{l-(n-k)} \right)$ for $k\leq r\leq n$; and $P_{M_n}(r)=0$ for $r<k$ and $r>n$.
We emphasize that $\sum_{r=k}^{n} P_{M_n}(r)=1$.
Thus, the mean of $M_n$ is given by
\begin{equation*}
\mathbb{E}[M_n]= \sum_{r=k}^n r P_{M_n} (r)
= \sum_{i=0}^{n-k} (k+i) 2^{-i}\prod_{j=i+1}^{n-k} \left( 1-2^{-j} \right).
\end{equation*}
Similarly, the second moment of $M_n$ is given by
\begin{equation*}
\mathbb{E}[M_n^2]= \sum_{r=k}^n r^2 P_{M_n} (r)
= \sum_{i=0}^{n-k} (k+i)^2 2^{-i}\prod_{j=i+1}^{n-k} \left( 1-2^{-j} \right).
\end{equation*}

\begin{theorem}\label{thm:meanvar}
For any $k$,
\begin{align}
\label{equation:MeanEM}
\lim_{n\rightarrow\infty}\mathbb{E}[M_n] &= k + c_0 \\
\label{equation:VarianceEM}
\lim_{n\rightarrow\infty}\mathrm{Var}[M_n] &= c_0 + c_1 .	
\end{align}
\end{theorem}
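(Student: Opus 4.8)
The plan is to recognize that the summands appearing in $\mathbb{E}[M_n]$ and $\mathbb{E}[M_n^2]$ are the finite-$n$ truncations of the infinite sums evaluated in Lemma~\ref{lem:infsums}, and then to justify passing the limit $n\to\infty$ inside the sums by a dominated-convergence argument. Concretely, I would introduce the finite-horizon weights
\[
a_i^{(n)} \triangleq 2^{-i}\prod_{j=i+1}^{n-k}\left(1-2^{-j}\right), \qquad 0\le i\le n-k ,
\]
so that $P_{M_n}(k+i)=a_i^{(n)}$ and, for each fixed $i$, $a_i^{(n)}\to a_i$ as $n\to\infty$ because the infinite product $\prod_{j=i+1}^{\infty}(1-2^{-j})$ converges. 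Since $\sum_{r=k}^{n}P_{M_n}(r)=1$ for every $n$, one has $\sum_{i=0}^{n-k}a_i^{(n)}=1$ exactly, and therefore the moments split as
\[
\mathbb{E}[M_n]=k+\sum_{i=0}^{n-k} i\,a_i^{(n)}, \qquad \mathbb{E}[M_n^2]=k^2+2k\sum_{i=0}^{n-k} i\,a_i^{(n)}+\sum_{i=0}^{n-k} i^2\,a_i^{(n)} .
\]

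The key step is to interchange the limit with each of these $n$-dependent partial sums. For this I would use the uniform bound $a_i^{(n)}\le 2^{-i}$, which holds because every factor $1-2^{-j}$ lies in $(0,1)$. Extending the summands by zero for $i>n-k$, the sequences $i\,a_i^{(n)}$ and $i^2\,a_i^{(n)}$ are dominated by the summable envelopes $i\,2^{-i}$ and $i^2\,2^{-i}$, respectively. Dominated convergence for series then gives $\sum_{i} i\,a_i^{(n)}\to c_0$ and $\sum_{i} i^2\,a_i^{(n)}\to c_0^2+c_0+c_1$, invoking Lemma~\ref{lem:infsums} for the limiting values. This immediately yields $\lim_{n\to\infty}\mathbb{E}[M_n]=k+c_0$, which is \eqref{equation:MeanEM}, together with $\lim_{n\to\infty}\mathbb{E}[M_n^2]=k^2+2kc_0+c_0^2+c_0+c_1$.

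For the variance I would simply combine the two limits. Since $\mathrm{Var}[M_n]=\mathbb{E}[M_n^2]-(\mathbb{E}[M_n])^2$ and $(k+c_0)^2=k^2+2kc_0+c_0^2$, the subtraction cancels the $k^2$, $2kc_0$, and $c_0^2$ terms and leaves precisely $c_0+c_1$, establishing \eqref{equation:VarianceEM}. The main obstacle, and the only nonroutine point, is the interchange of limit and summation, because both the summands and the upper limit of summation depend on $n$; the clean envelope $a_i^{(n)}\le 2^{-i}$ is what makes the dominated-convergence step rigorous, and one should check that the convergence $a_i^{(n)}\to a_i$ is genuinely pointwise, which it is since the omitted tail product over indices $j>n-k$ tends to $1$. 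Everything else is routine bookkeeping resting on Lemma~\ref{lem:infsums}.
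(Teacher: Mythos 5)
Your proposal is correct and follows essentially the same route as the paper's own proof: reduce both moments to the sums evaluated in Lemma~\ref{lem:infsums} and then apply the identity $\mathrm{Var}[M_n]=\mathbb{E}[M_n^2]-(\mathbb{E}[M_n])^2$. The only difference is that you explicitly justify the interchange of limit and summation via dominated convergence with the envelope $a_i^{(n)}\le 2^{-i}$, a step the paper performs implicitly when it writes the limiting moments as infinite sums; this makes your write-up slightly more rigorous but not a different argument.
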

\begin{proof}
Taking the limit, as $n$ becomes large, we get
\begin{align*}
\lim_{n \rightarrow \infty} \mathbb{E}[M_n]
&= \textstyle \sum_{i=0}^{\infty} (k+i) 2^{-i} \prod_{j=i+1}^{\infty} (1 - 2^{-j}) \\ 
\lim_{n\rightarrow\infty} \mathbb{E}[M_n^2]
&= \textstyle \sum_{i=0}^{\infty} (k^2 + 2ki + i^2)2^{-i} \prod_{j=i+1}^{\infty} \left( 1 - 2^{-j} \right) .
\end{align*}
By Lemma~\ref{lem:infsums}, we have \[\lim_{n \rightarrow \infty} \mathbb{E}[M_n]= k+c_0.\] Similarly, we have \[\lim_{n\rightarrow\infty} \mathbb{E}[M_n^2]= k^2+2kc_0+(c_0^2+c_0+c_1).\]
Also, since the variance of $M_n$ can readily be computed as $\mathrm{Var}[M_n]=\mathbb{E}[M^2_n]-\mathbb{E}[M_n]^2$, we obtain 
\begin{align*}
\lim_{n\rightarrow\infty} \mathrm{Var}[M_n] & = k^2+2kc_0+(c^2_0+c_0+c_1)-(k+c_0)^2\\ & =c_0+c_1.	
\end{align*} This completes the proof.
\end{proof}

\subsection{Asymptotic Behavior over Unreliable Channels}

We turn to the more elaborate problem whereby symbols are transmitted over a lossy channel.
That is, individual symbols are erased with probability $\epsilon > 0$.
For $k$, $n$, and $\epsilon$ fixed, we represent the length of a communication round by $N_n$.
Note that $k\leq N_n\leq n$.
We can partition rounds into two categories: (i) the receiver can decode before all the symbols are transmitted, and $N_n$ corresponds to the first instant at which the message can be successfully recovered; (ii) all the symbols are exhausted during the transmission phase, and $N_n = n$ irrespective of the outcome of the decoding process.
Again, we wish to study the asymptotic behavior of the mean and variance of $N_n$ as $n$ increases to infinity.

Define $E_r$ as the number of symbols erased prior to observing the $r$th unerased symbols at the receiver.
We write $P_{E_r}$ to denote the  discrete probability measure associated with $E_r$, and we note that this random variable possesses a negative binomial distribution with parameters $r$ and $\epsilon$.
In other words, we have $P_{E_r}(e) = \binom{r+e-1}{e} \epsilon^{e} (1-\epsilon)^{r}$ for $e\geq 0$.
Then, $\Pr(N_n = t)=\sum_{r=k}^{t} P_{E_r}(t-r) P_{M_n}(r)$ for $k\leq t<n$, and $\Pr(N_n = n) = 1-\sum_{t=k}^{n-1} \Pr(N_n = t)=\sum_{t=n}^{\infty}\sum_{r=k}^{t} P_{E_r}(t-r) P_{M_n}(r)$.
Thus, we can write 
\begin{equation*}
\begin{split}
\mathbb{E}[N_n] &= \textstyle \sum_{t=k}^{n} t\Pr(N_n = t) \\
&= \textstyle \sum_{t=k}^{\infty} \sum_{r=k}^{t} \min(t,n) P_{E_r}(t-r)P_{M_n}(r)\\ 
&= \textstyle \sum_{r=k}^{n}\sum_{e=0}^{\infty}\min(r+e,n)P_{E_r}(e)P_{M_n}(r).
\end{split}	
\end{equation*} Similarly, we can write
\begin{equation*}
\mathbb{E}[N^2_n] = \sum_{r=k}^{n}\sum_{e=0}^{\infty}\min((r+e)^2,n^2)P_{E_r}(e)P_{M_n}(r).
\end{equation*} 

\begin{theorem}\label{thm:meanvargen}
For any $k$ and $\epsilon$,
\begin{align}
\label{equation:MeanEN}
\mu(k,\epsilon) &\triangleq \lim_{n\rightarrow\infty}\mathbb{E}[N_n] = \frac{k+c_0}{1-\epsilon} \\\label{equation:VarianceEN}
\sigma^2(k,\epsilon) &\triangleq \lim_{n\rightarrow\infty}\mathrm{Var}[N_n] = \frac{(k+c_0)\epsilon+c_0+c_1}{(1-\epsilon)^2} .
\end{align}
\end{theorem}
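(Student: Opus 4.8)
The plan is to view each round length, once the cap at $n$ is set aside, as a conditional sum: given that $M_n=r$ unerased symbols are required for decoding, the round consumes exactly $r+E_r$ channel uses, where $E_r$ is the negative binomial erasure count with measure $P_{E_r}$. Since $N_n$ is then a deterministic function of $M_n$ and this conditional erasure count, its asymptotic moments reduce to those of $M_n$, which Theorem~\ref{thm:meanvar} has already determined.

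First I would record the moments of $P_{E_r}$, namely $\mathbb{E}[E_r]=r\epsilon/(1-\epsilon)$ and $\mathrm{Var}[E_r]=r\epsilon/(1-\epsilon)^2$. Writing $\tilde{N}_n$ for the untruncated length obtained by replacing $\min(r+e,n)$ with $r+e$, the tower property over $M_n$ yields the exact identities
\begin{align*}
\mathbb{E}[\tilde{N}_n] &= \frac{\mathbb{E}[M_n]}{1-\epsilon}, &
\mathbb{E}[\tilde{N}_n^2] &= \frac{\mathbb{E}[M_n^2]+\epsilon\,\mathbb{E}[M_n]}{(1-\epsilon)^2},
\end{align*}
where the second follows from $\mathbb{E}[(r+E_r)^2]=(r^2+r\epsilon)/(1-\epsilon)^2$, the coefficient of $r^2$ collapsing to $1/(1-\epsilon)^2$ because $(1-\epsilon)+\epsilon=1$. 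Substituting the limits $\mathbb{E}[M_n]\to k+c_0$ and $\mathbb{E}[M_n^2]\to(k+c_0)^2+c_0+c_1$ from Theorem~\ref{thm:meanvar}, and then forming $\sigma^2=\lim\mathbb{E}[\tilde{N}_n^2]-\mu^2$, produces exactly \eqref{equation:MeanEN} and \eqref{equation:VarianceEN}.

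The main obstacle is to confirm that the truncation is asymptotically harmless, i.e., that $\mathbb{E}[\tilde{N}_n]-\mathbb{E}[N_n]\to 0$ together with the analogous second-moment gap. The saving feature is that $M_n$ never grows with $n$: from \eqref{equation:DecodingSuccessExtended} one reads off the uniform bound $P_{M_n}(r)\le 2^{k-r}$, so the limiting law has exponential tails and every relevant sum admits an envelope independent of $n$. Concretely, I would bound the discarded mass by $\sum_{r}\sum_{e:\,r+e>n}(r+e)P_{E_r}(e)P_{M_n}(r)$, observe that for each fixed $r$ the inner sum is a negative binomial tail that vanishes as $n\to\infty$, and dominate the outer sum by the summable envelope $2^{k-r}\,r/(1-\epsilon)$; dominated convergence then forces the gap to zero, and the same argument with an extra factor of $r$ handles the second moment. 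Once this uniform integrability is established, the remaining steps are the routine bookkeeping already assembled above.
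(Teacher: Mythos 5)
Your proposal is correct, and its first half coincides with the paper's: the exact identities $\mathbb{E}[\tilde{N}_n]=\mathbb{E}[M_n]/(1-\epsilon)$ and $\mathbb{E}[\tilde{N}_n^2]=(\mathbb{E}[M_n^2]+\epsilon\,\mathbb{E}[M_n])/(1-\epsilon)^2$ are precisely the paper's upper bounds \eqref{eq:RHSEU} and \eqref{eq:RHSEEU}, obtained from the same negative-binomial moment computations, and both arguments then consume the limits through Theorem~\ref{thm:meanvar}. Where you genuinely diverge is in how the truncation at $n$ is shown to be harmless. The paper runs a two-sided sandwich: its lower bounds \eqref{eq:RHSEL} and \eqref{eq:RHSEEL} truncate the erasure sum at $e\le n-r$ and, crucially, invoke Lemma~\ref{lemma:StochaticDominance} (monotonicity of $P_{\mathrm{s}}(k,n,r)$ in $n$) to replace $P_{M_n}(r)$ by its limiting pmf $2^{k-r}\prod_{j=r-k+1}^{\infty}(1-2^{-j})$, after which Lemma~\ref{lem:infsums} identifies the limit of the lower bound directly. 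You instead bound the one-sided gap $\mathbb{E}[\tilde{N}_n]-\mathbb{E}[N_n]$ by the discarded tail mass, using only the crude envelope $P_{M_n}(r)\le 2^{k-r}$ (which needs just $P_{\mathrm{s}}\le 1$, not monotonicity) together with dominated convergence in $r$; the envelope $2^{k-r}r/(1-\epsilon)$, respectively $2^{k-r}(r^2+r\epsilon)/(1-\epsilon)^2$ for the second moment, is summable, and each tail vanishes pointwise. Your route buys a leaner dependency structure---Lemma~\ref{lemma:StochaticDominance} and the limiting pmf are never needed, and the conclusion follows from Theorem~\ref{thm:meanvar} alone---at the cost of invoking dominated convergence explicitly; the paper's route stays with elementary monotone series manipulations but must handle a slightly delicate double limit (outer range and inner truncation both growing with $n$) in its lower bound. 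Both are sound; yours packages the same underlying estimates as a uniform-integrability statement, which is arguably cleaner and would generalize more readily to other erasure statistics.
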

\begin{proof}
Observing that $\min(r+e,n) \leq r+e$, we get
\begin{equation*}
\mathbb{E}[N_n] \leq \textstyle \sum_{r=k}^{n} \sum_{e=0}^{\infty}(r+e) P_{E_r}(e)P_{M_n}(r)
\end{equation*}
for all $n$.
For a memoryless erasure channel, $E_r$ possesses a negative binomial distribution with parameters $r$ and $\epsilon$ and
\begin{equation*}
\begin{split}
\textstyle \sum_{e=0}^{\infty} (r+e)P_{E_r}(e)
&= \textstyle r \sum_{e=0}^{\infty} P_{E_r}(e) + \sum_{e=0}^{\infty}eP_{E_r}(e) \\
&= r+\mathbb{E}[E_r]
=  r/(1-\epsilon)
\end{split}
\end{equation*}
for all $r$. Thus, for any $n$, we have
\begin{equation}\label{eq:RHSEU}
\begin{split}
\mathbb{E}[N_n] &\leq \frac{1}{1-\epsilon} \textstyle \sum_{r=k}^{n} r P_{M_n}(r)
= \displaystyle \frac{\mathbb{E}[M_n]}{1-\epsilon} .
\end{split}
\end{equation}
Next, we establish a lower bound for $\mathbb{E}[N_n]$;
\begin{equation*}
\mathbb{E}[N_n]\geq \textstyle \sum_{r=k}^{n}\sum_{e=0}^{n-r}(r+e) P_{E_r}(e)P_{M_n}(r).	
\end{equation*}
Since $P_{\mathrm{s}}(k,n,r)$ is monotone decreasing in $n$ and $P_{M_n}(r) = 2^{k-r}P_{\mathrm{s}}(k,n,r)$ for all $k\leq r\leq n$, we gather that $P_{M_n}(r)$ is monotone decreasing in $n$ for all $k\leq r\leq n$.
This implies that $P_{M_n}(r)\geq \lim_{n\rightarrow\infty} P_{M_n}(r)$ for all $n$ and all $k\leq r\leq n$. Note that $\lim_{n\rightarrow\infty} P_{M_n}(r) = 2^{k-r}\prod_{j=r-k+1}^{\infty}(1-2^{-j})$. Thus, for all $n$, we can write 
\begin{equation} \label{eq:RHSEL}
\begin{split}
\mathbb{E}[N_n] &\geq \sum_{r=k}^{n} \sum_{e=0}^{n-r} (r+e) P_{E_r}(e) 2^{k-r} \prod_{j=r-k+1}^{\infty}(1-2^{-j})\\
&=\sum_{r=k}^{n} 2^{k-r}  \sum_{e=0}^{n-r} (r+e)P_{E_r}(e) \prod_{j=r-k+1}^{\infty}(1-2^{-j}).	
\end{split}
\end{equation}
Using Theorem~\ref{thm:meanvar}, we deduce that the RHS of~\eqref{eq:RHSEU} converges to $(k+c_0)/(1-\epsilon)$ as $n$ grows unbounded.
Furthermore, in view of Lemma~\ref{lem:infsums}, we see that the RHS of~\eqref{eq:RHSEL} converges to 
\begin{equation*}
\begin{split}
&\textstyle \sum_{r=k}^{\infty} 2^{k-r}\sum_{e=0}^{\infty}(r+e)P_{E_r}(e) \prod_{j=r-k+1}^{\infty}(1-2^{-j}) \\
&= \frac{1}{1-\epsilon} \textstyle \sum_{r=k}^{\infty} r 2^{k-r} \prod_{j=r-k+1}^{\infty} (1-2^{-j}) \\
&= \frac{1}{1-\epsilon} \textstyle \sum_{i=0}^{\infty} (k+i) 2^{-i}\prod_{j=i+1}^{\infty} (1-2^{-j})
= \displaystyle \frac{k+c_0}{1-\epsilon} .
\end{split}
\end{equation*}
Combining~\eqref{eq:RHSEU} and~\eqref{eq:RHSEL}, the sandwich theorem yields~\eqref{equation:MeanEN}.

By adopting a similar approach, we can obtain upper bound
\begin{equation}\label{eq:RHSEEU}
\begin{split}
\mathbb{E} \left[ N^2_n \right]
&\leq \textstyle \sum_{r=k}^{n}\sum_{e=0}^{\infty} (r+e)^2 P_{E_r}(e) P_{M_n}(r) \\
&= \frac{\mathbb{E}[M^2_n]+\epsilon \mathbb{E}[M_n]}{(1-\epsilon)^2}
\end{split}
\end{equation}
and lower bound
\begin{equation}\label{eq:RHSEEL}
\begin{split}
\mathbb{E} & \left[ N^2_n \right]
\geq \sum_{r=k}^{n}\sum_{e=0}^{n-r} (r+e)^2 P_{E_r}(e) P_{M_n}(r) \\
&\geq \sum_{r=k}^{\infty} 2^{k-r}\sum_{e=0}^{\infty}(r+e)^2 P_{E_r}(e) \prod_{j=r-k+1}^{\infty}(1-2^{-j})
\end{split}
\end{equation}
for all $n$.
As $n$ goes to infinity, the RHS of~\eqref{eq:RHSEEU} converges to $\left( (k+c_0)^2+(k+c_0)\epsilon+c_0+c_1 \right)/(1-\epsilon)^2$ by Theorem~\ref{thm:meanvar}.
Likewise, by Lemma~\ref{lem:infsums}, the RHS of~\eqref{eq:RHSEEL} converges to 
\begin{equation*}
\begin{split}
&\frac{1}{(1-\epsilon)^2} \textstyle \sum_{r=k}^{\infty} r(r+\epsilon) 2^{k-r} \prod_{j=r-k+1}^{\infty} (1-2^{-j}) \\
&= \frac{1}{(1-\epsilon)^2} \textstyle \sum_{i=0}^{\infty} (k+i)^2 2^{-i} \prod_{j=i+1}^{\infty} (1-2^{-j}) \\
&\quad + \frac{\epsilon}{(1-\epsilon)^2} \textstyle \sum_{i=0}^{\infty} (k+i) 2^{-i} \prod_{j=i+1}^{\infty} (1-2^{-j}) \\
&= \left( (k+c_0)^2+(k+c_0)\epsilon+c_0+c_1 \right)/(1-\epsilon)^2 .
\end{split}
\end{equation*}
Combining~\eqref{eq:RHSEEU} and~\eqref{eq:RHSEEL}, the sandwich theorem offers a tight characterization of the asymptotic second moment of $N_n$,
\begin{equation*}
\lim_{n\rightarrow\infty}\mathbb{E}[N^2_n]	
=  \frac{ (k+c_0)^2+(k+c_0)\epsilon+c_0+c_1 }{(1-\epsilon)^2} .
\end{equation*}
From its first and second moments, we can infer the variance of $N_n$;
$\lim_{n\rightarrow\infty} \mathrm{Var}[N_n] = \lim_{n\rightarrow\infty} \mathbb{E}[N^2_n]-\lim_{n\rightarrow\infty} \mathbb{E}[N_n]^2 = ((k+c_0)\epsilon+c_0+c_1)/(1-\epsilon)^2$.
\end{proof}

\section{Sequential Differential Optimization}

In this section, we apply sequential differential optimization (SDO) to incremental redundancy in the context of random codes over erasure channels~\cite{VRDW:2016}.
This technique works with continuous random variables and, as such, we must find a suitable approximation for the distribution of discrete random variable $N_n$.
To begin, we emphasize that $P_{\mathrm{ack}}$ matches the cumulative distribution function (CDF) of $N_n$ for $t < n$.
Below, we adopt moment matching to find a smooth approximation to $P_{\mathrm{ack}}$~\cite{casella2001statistical}.
The mean and variance of $N_n$, in the asymptotic regime where $n$ becomes large, are given in Theorem~\ref{thm:meanvargen}.
We can therefore approximate $P_{\mathrm{ack}}$ by the CDF of a  real-valued random variable $X$, which we denote by $F$, where $\mathbb{E}[X] = \mu(k,\epsilon)$ and $\mathrm{Var}[X] = \sigma^2(k,\epsilon)$.
Henceforth, we replace $\mu(k,\epsilon)$ and $\sigma^2(k,\epsilon)$ by the implicit forms $\mu$ and $\sigma^2$, respectively, for notational convenience.

The CDF approximation enables us to utilize SDO to find near optimal values for sub-block sizes $\{n_i\}$.
This technique works as follows. 
Given $\{n_j\}_{1\leq j<i}$, the optimal value of $n_i$, $1<i<m$, can be computed via setting the partial derivative of $\tilde{\mathbb{E}}[n_S]$, which is defined as $\mathbb{E}[n_S]$ when $P_{\mathrm{ack}}$ is replaced by $F$, with respect to $n_{i-1}$ to zero and solving for $n_i$. The structure of SDO immediately yields the following result.

\begin{result}\label{lem:SDO}
For any $1< i<m$, an approximation of the optimal value of $n_i$, denoted by $\tilde{n}_i$, as a function of $\{\tilde{n}_j\}_{1\leq j<i}$, is obtained recursively via
\begin{equation*}
\tilde{n}_{i} =
\tilde{n}_{i-1}+\left\lceil\left(F(\tilde{n}_{i-1})-F(\tilde{n}_{i-2})\right) \left(\Bigl.\frac{dF(x)}{dx}\Bigr|_{x=\tilde{n}_{i-1}}\right)^{-1}\right\rceil
\end{equation*}
where $\tilde{n}_1$ is given and $n_0 \triangleq -\infty$. 
\end{result}
\begin{proof} 
Since $\tilde{\mathbb{E}}[n_S] = \sum_{i=1}^{m-1} (n_i-n_{i+1})F(n_i)+n_m$, 
we have
\[\frac{\partial\tilde{\mathbb{E}}[n_S]}{\partial n_1}
= F(n_1)+(n_1-n_2)\Bigl.\frac{dF(x)}{dx}\Bigr|_{x=n_1}.\]
Setting $\frac{\partial\tilde{\mathbb{E}}[n_S]}{\partial n_1}=0$ and solving for $n_2$ yields the result for $i=2$. Similarly, it can be seen that
\[\frac{\partial\tilde{\mathbb{E}}[n_S]}{\partial n_i}=F(n_i)-F(n_{i-1})+(n_{i}-n_{i+1})\Bigl.\frac{dF(x)}{dx}\Bigr|_{x=n_i}.\]
Setting $\frac{\partial\tilde{\mathbb{E}}[n_S]}{\partial n_i}=0$ yields the result for all $2<i<m$.
\end{proof}

\paragraph*{Normal Approximation} 
Paralleling the steps in \cite{VRDW:2016}, we first approximate the distribution of $N_n$ by a normal distribution $\mathcal{N}(\mu,\sigma^2)$ with mean $\mu$ and variance $\sigma^2$, as defined above. The CDF of $N_n$ is then approximated by $F(x)\triangleq 1-Q(\frac{x-\mu}{\sigma})$, where $Q(x) = \frac{1}{\sqrt{2 \pi}}\int_{x}^{\infty} e^{-t^2/2} dt$ is the complementary CDF of a standard Gaussian variable.
Note that $\frac{dF(x)}{dx}=-\frac{1}{\sigma}Q'(\frac{x-\mu}{\sigma})$ where $Q'(x)\triangleq -\frac{1}{\sqrt{2\pi}}e^{-x^2/2}$.  

For any $k$, $m$, and $\epsilon$, an approximation of the optimal value of $n_i$, denoted by $\tilde{n}_i$, is given by 
\begin{equation*}
\begin{split}
\textstyle \tilde{n}_{i} = \tilde{n}_{i-1}
+ & \left\lceil \left( Q \left(\frac{\tilde{n}_{i-1}-\mu}{\sigma}\right)-Q\left(\frac{\tilde{n}_{i-2}-\mu}{\sigma}\right)\right) \right. \\
&\quad \times \left. \left(\frac{1}{\sigma}Q'\left(\frac{\tilde{n}_{i-1}-\mu}{\sigma}\right)\right)^{-1}\right\rceil
\end{split}
\end{equation*}
for all $1<i<m$, where $\tilde{n}_1$ is given, and $\tilde{n}_0\triangleq -\infty$. 

\paragraph*{Log-Normal Approximation} 
An alternate candidate approximation for $P_{\mathrm{ack}}$ is given by the log-normal distribution $\mathcal{LN}(\mu_{\star},\sigma^2_{\star})$ with parameters $\mu_{\star}= \ln(\mu^2/\sqrt{\mu^2+\sigma^2})$ and $\sigma^2_{\star}=\ln(1+\sigma^2/\mu^2)$.
That is, the CDF of $N_n$ can be approximated by $F_{\star}(x)\triangleq 1-Q ( \frac{\ln x-\mu_{\star}}{\sigma_{\star}})$. Note that, in this case, $\frac{dF_{\star}(x)}{dx} = -\frac{1}{x\sigma_{\star}}Q' ( \frac{\ln x-\mu_{\star}}{\sigma_{\star}})$.

For any $k$, $m$, and $\epsilon$, an approximation of the optimal value of $n_i$, denoted by $\tilde{n}_i$, is given by 
\begin{equation*}
\begin{split}
\textstyle \tilde{n}_{i} = \tilde{n}_{i-1}
+ & \left\lceil \left( Q\left(\frac{\ln \tilde{n}_{i-1}-\mu_{\star}}{\sigma_{\star}}\right)-Q\left(\frac{\ln \tilde{n}_{i-2}-\mu_{\star}}{\sigma_{\star}}\right) \right) \right. \\
&\quad \times \left. \left(\frac{1}{\tilde{n}_{i-1} \sigma_{\star}} Q'\left(\frac{\ln \tilde{n}_{i-1}-\mu_{\star}}{\sigma_{\star}}\right)\right)^{-1}\right\rceil
\end{split}
\end{equation*}
for all $1<i<m$, where $\tilde{n}_1$ is given, and $\tilde{n}_0\triangleq 0$.

\section{Results}
In this section, we first compare the performance of a system where parameters are obtained via exhaustive search to that of a competing implementation with parameters tuned using SDO, both with the normal approximation (NA) and the log-normal approximation (LNA).
Figure~\ref{fig:TvsKESNA} depicts throughput $T \triangleq kP_{\mathrm{ack}}(n)/\mathbb{E}[n_S]$ versus  message size $k$ for $\epsilon=0.5$, $n\in \{88,104\}$, and $m\in \{2,4\}$.
As can be seen, both SDO-NA and SDO-LNA result in near optimal throughput, with the performance curves being essentially indistinguishable.
For scenarios beyond those shown in the figure, our extensive numerical evaluations suggest that SDO-LNA slightly outperforms SDO-NA.  

\begin{figure}
\centering{\includegraphics[width=0.45\textwidth]{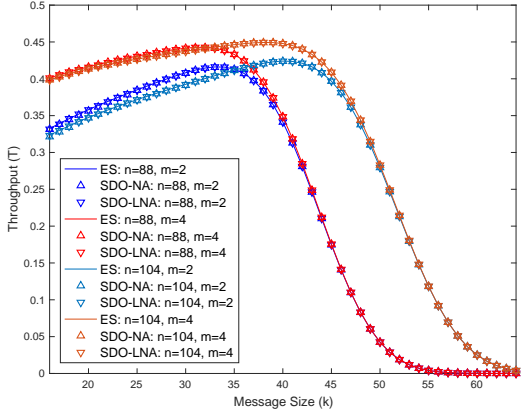}}\vspace{-0.25cm}
\caption{A comparison between exhaustive search (ES), the SDO optimization based on the normal approximation (SDO-NA) and the log-normal approximation (SDO-LNA).
While SDO-based methods are more computationally efficient, ensuing performance is nearly indistinguishable.}
\label{fig:TvsKESNA}\vspace{-0.25cm}
\end{figure}
  
Figure~\ref{fig:TvsN} shows the effect of the number of sub-blocks ($m$) on the throughput ($T$) as a function of the blocklength ($n$), for $k=32$ and $\epsilon=0.5$.
When $n$ is fixed, the throughput increases with $m$; this is to be expected.
Still, the benefits in terms of throughout for $m \geq 5$ becomes relatively small, offering diminishing returns.
This suggests that a small number of feedback messages suffice to achieve a throughput close to the maximum throughput obtained with unlimited feedback, an encouraging result for pragmatic systems.
\begin{figure}
\centering{\includegraphics[width=0.45\textwidth]{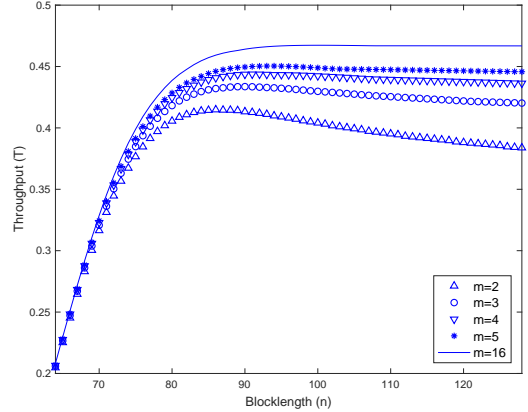}}\vspace{-0.25cm}
\caption{Throughput ($T$) as a function of blocklength ($n$) for different constraints on the maximum number of sub-blocks ($m$).
The optimal blocklength is very robust to constraint $m$, and largely dictated by channel characteristics.}
\label{fig:TvsN}\vspace{-0.25cm}
\end{figure}

More generally, the methodology developed in this work offers a technical pathway to using incremental redundancy for queueing analysis~\cite{ieee-tit-2013-kcp,ieee-tcomm-2015-hcp} or age of information problems~\cite{ephremides2016age,yates2017age} over erasure channels.
In addition, SDO may also play a role in real-time scheduling~\cite{hou2012scheduling}, fundamental limits of control under communication constraints~\cite{tatikonda2005feedback}, and learning methods for communication systems.

\appendix

\subsection{Proof of Lemma~\ref{lem:DSP}}
Let $H$ be a random matrix of size $(n-k) \times n$, where each entry is selected independently and uniformly from $\{0,1\}$. Consider an $(n,k)$ code with parity-check matrix $H$. Then, for any codeword $\mathbf{c}$, we have $H\mathbf{c}^{T} = 0$.
As such, the receiver can decode the message from any $r$ received symbols $c_{i_1},\dots,c_{i_r}$ provided that the $n-r$ columns of $H$ with indices $[n]\setminus \{i_1,\dots,i_r\}$ are linearly independent.
That is, $P_{\mathrm{s}}(k,n,r)$ is equal to the probability that $n-r$ randomly generated binary column vectors of length $n-k$ are linearly independent.
This event has probability
\begin{equation*}
P_{\mathrm{s}}(k,n,r) = \prod_{l=1}^{n-r}\frac{\left( 2^{n-k}-2^{l-1} \right)}{2^{n-k}}
= \prod_{l = 0}^{n-r-1} \left(1 - 2^{l-(n-k)} \right)
\end{equation*}
for $k\leq r\leq n$, and $P_{\mathrm{s}}(k,n,r) = 0$ for $r<k$.

\subsection{Proof of Lemma~\ref{lemma:StochaticDominance}}

Fix $k$ and $r$.
Consider two different blocklengths $n_1$ and $n_2$ such that $n_1 < n_2$.
If $r < k$, then the receiver cannot decode and $P_{\mathrm{s}} (k, n_1, r) = P_{\mathrm{s}} (k, n_2, r) = 0$.
By definition, when $r > n_1$, we have $P_{\mathrm{s}} (k, n_1, r) = 1$, which is necessarily greater than or equal to $P_{\mathrm{s}} (k, n_2, r)$.
Thus, the case of interest is $k \leq r \leq n_1 < n_2$, with
\begin{equation*}
\begin{split}
P_{\mathrm{s}} & (k, n_2, r)
= \prod_{l=r-k+1}^{n_2-r-1} \left( 1 - 2^{-l} \right) \\
&= \left( \prod_{l=r-k+1}^{n_1-r-1} \left( 1 - 2^{-l} \right) \right)
\times \left( \prod_{l=n_1-r}^{n_2-r-1} \left( 1 - 2^{-l} \right) \right) \\
&= P_{\mathrm{s}} (k, n_1, r)
\times \prod_{l=n_1-r}^{n_2-r-1} \left( 1 - 2^{-l} \right) \\
&\leq P_{\mathrm{s}} (k, n_1, r) .
\end{split}
\end{equation*}
By the arguments above, for $k$ and $r$ fixed, the function $P_{\mathrm{s}}(k,n,r)$ is monotone decreasing in $n$.

\subsection{Proof of Lemma~\ref{lem:infsums}}
First, we prove $\sum_{i=0}^{\infty} a_i = 1$.
Let \[S_l\triangleq \sum_{i=0}^{\infty} 2^{-i}\prod_{j=i+1}^{i+l}(1-2^{-j})\] for all $l\geq 1$, and let $S_0 \triangleq \sum_{i=0}^{\infty} 2^{-i}$.
Note that $\sum_{i=0}^{\infty}a_i=\lim_{l\rightarrow\infty} S_{l}$.
By some algebra, it follows that \[S_l = S_{l-1}-\frac{2^l}{(2^l-1)(2^{l+1}-1)}.\]
Then, it is easy to see that 
\begin{align*}
S_l & = S_0-\sum_{i=1}^{l} \frac{2^i}{(2^i-1)(2^{i+1}-1)}\\ & = S_0-\sum_{i=0}^{l}\frac{1}{2^i-1}+\sum_{i=0}^{l}\frac{1}{2^{i+1}-1}.	
\end{align*}
Thus, \[\lim_{l\rightarrow\infty} S_l = S_0-\sum_{i=1}^{\infty}\frac{1}{2^i-1}+\sum_{i=1}^{\infty}\frac{1}{2^{i+1}-1}.\]
Since $\sum_{i=1}^{\infty}\frac{1}{2^i-1}=c_0$ and $\sum_{i=1}^{\infty}\frac{1}{2^{i+1}-1}=c_0-1$, then 
\begin{align*}
 \lim_{l\rightarrow\infty} S_l = S_0-c_0+(c_0-1)=S_0-1.	
\end{align*}
Moreover, $S_0 = 2$.
Thus, \[\sum_{i=0}^{\infty}a_i=\lim_{l\rightarrow\infty} S_{l}=1.\] 


Next, we prove that $\sum_{i=0}^{\infty} ia_i= c_0$.
By using the Euler's pentagonal number theorem, it can be shown that 
\begin{equation}\label{eq:EulerPent}
\prod_{i=0}^{\infty} \frac{1}{1-2^{-i} x}=\sum_{i=0}^{\infty} x^i \prod_{j=1}^{i} \frac{1}{1-2^{-j}}. 	
\end{equation} Taking derivative with respect to $x$ from both sides of this identity, we get 
\begin{dmath*}
\left(\sum_{i=0}^{\infty}\frac{2^{-i}}{1-2^{-i}x}\right)\left(\prod_{j=0}^{\infty}\frac{1}{1-2^{-j}x}\right)=\sum_{i=1}^{\infty} ix^{i-1} \prod_{j=1}^{i} \frac{1}{1-2^{-j}}.	
\end{dmath*} Setting $x=1/2$, we get \begin{dmath*}
 \left(\sum_{i=0}^{\infty}\frac{2^{-i}}{1-2^{-i-1}}\right)\left(\prod_{j=0}^{\infty}\frac{1}{1-2^{-j-1}}\right)=\sum_{i=1}^{\infty} i2^{-i+1} \prod_{j=1}^{i} \frac{1}{1-2^{-j}}.	
\end{dmath*}
By a simple change of variables and rearranging the terms, 
\begin{dmath*}
\sum_{i=1}^{\infty}\frac{2^{-i}}{1-2^{-i}}=\sum_{i=1}^{\infty} i2^{-i} \prod_{j=i+1}^{\infty} (1-2^{-j}). 	
\end{dmath*}
Since 
\begin{dmath*}
\sum_{i=1}^{\infty} i2^{-i} \prod_{j=i+1}^{\infty} (1-2^{-j})=\sum_{i=0}^{\infty} ia_i, 	
\end{dmath*} then 
\begin{dmath*}
\sum_{i=0}^{\infty} ia_i=\sum_{i=1}^{\infty}\frac{2^{-i}}{1-2^{-i}}	
\end{dmath*}
Moreover, 
\begin{dmath*}
\sum_{i=1}^{\infty}\frac{2^{-i}}{1-2^{-i}}=\sum_{i=1}^{\infty} \frac{1}{2^{i}-1}=c_0. 	
\end{dmath*} Thus, $\sum_{i=0}^{\infty}ia_i=c_0$. 

Finally, we prove that $\sum_{i=0}^{\infty} i^{2} a_i = c_0^2+c_0+c_1$.
Taking derivative twice with respect to $x$ from both sides of identity~\eqref{eq:EulerPent} and setting $x=1/2$, we get 
\begin{dmath*}
\left(\sum_{i=1}^{\infty}(2^i-1)^{-1}\right)^2+\left(\sum_{i=1}^{\infty}(2^i-1)^{-2}\right)=\sum_{i=0}^{\infty} i(i-1)2^{-i} \prod_{j=i+1}^{\infty} (1-2^{-j}).	
\end{dmath*} Since $\sum_{i=1}^{\infty}(2^i-1)^{-1}=c_0$ and $\sum_{i=1}^{\infty}(2^i-1)^{-2}=c_1$, then 
\begin{dmath*}
\sum_{i=0}^{\infty} i(i-1)2^{-i} \prod_{j=i+1}^{\infty} (1-2^{-j}) = c_0^2+c_1. 	
\end{dmath*} Moreover, 
\begin{dmath*}
\sum_{i=0}^{\infty} i(i-1)2^{-i} \prod_{j=i+1}^{\infty} (1-2^{-j})=\sum_{i=0}^{\infty} i^2a_i -\sum_{i=0}^{\infty} ia_i. 	
\end{dmath*} Thus, 
\begin{dmath*}
\sum_{i=0}^{\infty} i^2a_i -\sum_{i=0}^{\infty} ia_i=c_0^2+c_1. 	
\end{dmath*} Since $\sum_{i=0}^{\infty} ia_i=c_0$, then $\sum_{i=0}^{\infty} i^2a_i=c_0^2+c_0+c_1$.

\bibliographystyle{IEEEtran}
\bibliography{isit2018}

\end{document}